\def\BibTeX{{\rm B\kern-.05em{\sc i\kern-.025em b}\kern-.08em
    T\kern-.1667em\lower.7ex\hbox{E}\kern-.125emX}}
\newtheorem{theorem}{Theorem}
\newtheorem*{theorem*}{Theorem}
\newtheorem{lemma}[theorem]{Lemma}
\newtheorem{definition}[theorem]{Definition}
\newtheorem{remark}[theorem]{Remark}
\newcommand{\seta}{\ensuremath{\mathcal{A}}}
\newcommand{\setx}{\ensuremath{\mathcal{X}}}
\newcommand{\sety}{\ensuremath{\mathcal{Y}}}
\newcommand{\setz}{\ensuremath{\mathcal{Z}}}
\newcommand{\setu}{\ensuremath{\mathcal{U}}}
\newcommand{\sete}{\ensuremath{\mathcal{E}}}
\newcommand{\sett}{\ensuremath{\mathcal{T}}}
\newcommand{\circlearrow}{}
\DeclareRobustCommand{\circlearrow}{%
  \mathrel{\vphantom{\rightarrow}\mathpalette\circle@arrow\relax}%
}
\newcommand{\circle@arrow}[2]{%
  \m@th
  \ooalign{%
    \hidewidth$#1\circ\mkern1mu$\hidewidth\cr
    $#1-$\cr}%
}
\DeclarePairedDelimiterX{\infdivx}[2]{(}{)}{%
  #1\;\delimsize\|\;#2%
}
\begin{document}
\title{Common Randomness Generation from Gaussian Sources} 



\author{
\IEEEauthorblockN{Wafa Labidi\IEEEauthorrefmark{1}, Rami Ezzine\IEEEauthorrefmark{1}, Christian Deppe \IEEEauthorrefmark{2} and Holger Boche\IEEEauthorrefmark{1}\IEEEauthorrefmark{3}\IEEEauthorrefmark{4}}\\
\IEEEauthorblockA{\IEEEauthorrefmark{1}Technical University of Munich, Chair of Theoretical Information Technology, Munich, Germany\\
\IEEEauthorrefmark{2}Technical University of Munich, Institute for Communications Engineering, Munich, Germany \\
\IEEEauthorrefmark{3}CASA -- Cyber Security in the Age of Large-Scale Adversaries–
Exzellenzcluster, Ruhr-Universit\"at Bochum, Germany\\
\IEEEauthorrefmark{4}Munich Center for Quantum Science and Technology (MCQST), Schellingstr. 4, 80799 Munich, Germany\\
Email: \{wafa.labidi, rami.ezzine, christian.deppe, boche\}@tum.de}}

\maketitle


\begin{abstract}
We study the problem of common randomness (CR) generation in the basic two-party communication setting in which the sender and the receiver aim to agree on a common random variable with high probability by observing independent and identically distributed (i.i.d.)  samples of correlated Gaussian sources and while communicating as little as possible over a noisy memoryless channel. We completely solve the problem by giving a single-letter characterization of the CR capacity for the proposed model and by providing a rigorous proof of it. Interestingly, we prove that the CR capacity is infinite when the Gaussian sources are perfectly correlated. 
\end{abstract}
\begin{IEEEkeywords}
Common randomness generation, Gaussian sources, memoryless channels
\end{IEEEkeywords}
\section{Introduction}
In the context of common randomness (CR) generation, the sender and the receiver, often described as terminals, aim to agree on a common random variable with high probability.
The availability of this CR is advantageous as it allows to implement correlated random protocols that often perform faster and more efficiently than the deterministic ones or the ones using independent randomization.

 An enormous performance gain can be achieved by taking advantage of the resource CR in the identification scheme, since it may allow a significant increase in the identification capacity of channels \cite{trafo,part2,ahlswede2021}.
 The identification scheme is a new approach  in communications
developed by Ahlswede and Dueck \cite{Idchannels} in 1989. For many new applications with high requirements on reliability and latency such as  several machine-to-machine
and human-to-machine systems \cite{application}, the tactile internet
\cite{Tactilesinternet}, digital watermarking \cite{MOULINwatermarking,AhlswedeWatermarking,SteinbergWatermarking}, industry 4.0 \cite{industry4.0}, the identification approach is much more efficient than the classical transmission scheme  proposed by Shannon \cite{Shannon}. In the identification framework, the encoder sends an
identification message (called also identity) over the channel
and the decoder is not interested in what the received message
is, but wants to know whether a specific message has been
sent or not.

Many researches explored the problem of CR generation from correlated discrete sources. This problem was initially introduced by 
Ahlswede and Csizár in \cite{part2}, where the sender and the receiver are  additionally allowed to communicate over a  discrete noiseless channel with limited capacity. Unlike in the fundamental two papers \cite{part1}\cite{maurer}, no secrecy requirements are imposed.
A single-letter characterization of the CR capacity for that model was established in \cite{part2}. CR capacity refers to the maximum rate of CR that Alice and Bob can generate using the resources available in the model. 
Later, the results on CR capacity have been extended in \cite{globecom} to point-to-point single-input single-output (SISO) and  Multiple-Input Multiple-Output (MIMO) Gaussian channels  for  their  practical  relevance  in many communication situations such as  wired  and  wireless communications,  satellite  and  deep  space  communication  links, etc. The results on CR capacity over Gaussian channels have been used to establish a lower-bound on their corresponding correlation-assisted  secure identification capacity in the log-log scale \cite{globecom}. This lower bound can already exceed the secure identification capacity over Gaussian channels with randomized encoding elaborated in \cite{wafapaper}.
The problem of CR generation over SISO and MIMO fading channels has been investigated in  \cite{SISOfasingCR} and in \cite{MIMOfadingCR}, respectively, where the authors introduced the concept of outage in the CR generation framework.

However, as far as we know, there are no results regarding CR generation from correlated continuous sources. The main contribution of our work lies in establishing a single-letter characterization of the CR capacity for a model involving a bivariate Gaussian source with unidirectional communication over noisy memoryless channels. We will extend the CR capacity formula established in \cite{part2} for correlated discrete sources to correlated Gaussian sources. 
Interestingly, in contrast to the discrete case where the CR capacity is always finite \cite{part2}\cite{globecom}, we will show that the CR capacity is infinite when the Gaussian sources are perfectly correlated. In such a situation, no communication over the channel is required. We were motivated by the drastic effects on the identification capacity produced by the common randomness generated from the perfect feedback in the model treated in \cite{isit_paper}. The authors in \cite{isit_paper} proved that the identification capacity of Gaussian channels with noiseless feedback is infinite regardless of the scaling by proposing a coding scheme that generates an infinitely large amount of CR between the sender and the receiver using noiseless feedback.

Applications of our work include the problem of correlation-assisted identification, where the sender and the receiver have access to a correlated Gaussian source. Indeed, analogously to the discrete case \cite{globecom} and based on an early work in \cite{concatenation}, one can  construct identification codes for noisy memoryless channels based on the concatenation of two transmission codes 
using CR as a resource. 

\quad \textit{ Paper Outline:} The rest of the paper is organized as follows. In Section \ref{preliminaries}, we  introduce a generalized typicality criteria that can be applied to any i.i.d. continuous sources and we establish the conditional typicality lemma and conditional divergence lemma for the proposed typicality criteria using the weak law of large numbers (WLLN). In Section \ref{systemmodelanddefinitions}, 
we present the  system model for CR generation, provide the key definitions and the main result.
In  Section \ref{direct}, we will prove the achievability of the CR capacity by proposing a coding scheme based on the same type of binning as in the Wyner-Ziv problem, where we make use of the conditional typicality and the conditional divergence lemma elaborated in Section \ref{preliminaries}.
The converse proof of the CR capacity is established in Section \ref{converse}.
Section \ref{conclusion} contains concluding remarks.
\section{Preliminaries}
\label{preliminaries}
\subsection{Notations}
Calligraphic letters $\setx, \sety,\setz, \ldots$ are used for finite or infinite sets; lowercase letters $x,y,z,\ldots$ stand for constants and values of random variables; uppercase letters $X,Y,Z,\ldots$ stand for random variables;
 For any random variables $X$, $Y$ and $Z$, we use the notation $\color{black}X \circlearrow{Y} \circlearrow{Z}\color{black}$ to indicate a Markov chain.
$\mathbb{R}$ denotes the sets of real numbers; 
$p_X$ denotes the probability density function of a continuous RV $X$; $|\setx|$ denotes the cardinality a finite set $\setx$; the set of probability distributions on the set $\setx$ is denoted by $\mathcal{P}(\setx)$; $H(\cdot)$, $\mathbb{E}(\cdot)$ and $I(\cdot ;\cdot)$ are the entropy, the expected value and the mutual information, respectively;
all logarithms and information quantities are taken to base $2$.
\subsection{Typicality Criteria for Continuous Alphabet} \label{subsec: typicality}
Inspired by the generalized typicality criteria introduced in \cite{Generalized_typ} and based on the information-spectrum approach \cite{HanBook}, we define the following typicality criterion. This criterion can be applied to i.i.d. source/channel coding problems.
\begin{definition}
Suppose $\delta>0$ and $(X^n,Y^n)$ was emitted by the bivariate Gaussian memoryless source $P_{XY}$.
The sequence pair $(x^n, y^n)$ is called jointly $\delta$-typical with respect to $p_{XY}$ if
\begin{align}
    &|\frac{1}{n}i_{X^nY^n}(x^n,y^n)-I(X;Y)|\leq \delta, \quad \delta>0,
    \label{eq:criterion}
\end{align}
where $i_{X^nY^n}(x^n,y^n)$ is the information density \cite{HanBook} defined as 
\begin{equation*}
    i_{X^nY^n}(x^n,y^n)=\log\left( \frac{dp_{Y^n|X^n}(y^n|x^n)}{dp_{Y^n}(y^n)} \right)
\end{equation*}
when $p_{Y^n|X^n}$ is absolutely continuous w.r.t. $p_{Y^n}$. Let $\sett_\delta^{X^nY^n}$ denote the set of all $\delta$-jointly typical sequences.
\end{definition}
\begin{remark}
In \cite{mitran_polish} and \cite{Borel_Raginsky}, the authors introduced typicality criteria for measures on a Polish space and a Borel space, respectively. In \cite{Generalized_typ}, the authors considered only measurable spaces as alphabets.
\end{remark}
In the following, we consider the properties of sets with probability approaching one \cite{Generalized_typ}.
\begin{lemma}{\cite{Generalized_typ}}
Given a bivariate Gaussian memoryless source $p_{XY}$, we denote $\{\seta^{X^n Y^n}\}_{n=1}^{\infty}$ as a sequence of sets satisfying the following condition
\begin{equation}
    p_{X^nY^n}(\seta^{X^nY^n})\geq1-\alpha(n),\quad \lim_{n\to \infty} \alpha(n)=0
    \label{eq:jointTypicality}
\end{equation}
where $\seta^{X^n Y^n} \subset \setx^n \times \sety^n$ is $p_{X^nY^n}$-measurable for all $n\in \mathbb{N}$. Let
\begin{align*}
&\seta^{Y^n|x^n}= \{y^n \in \sety^n|(x^n, y^n) \in \seta^{X^n Y^n} \} \\
& \text{and }\seta^{X^n|Y^n}= \{x^n \in \setx^n|p_{Y^n|X^n}(\seta^{Y^n|x^n}|x^n) > 0\}.
\end{align*}
Then, for all $n\in \mathbb{N}$, the set $\{\seta^{X^n Y^n}\}_{n=1}^{\infty}$ has the following properties
\begin{align}
    & \lim_{n\to \infty} p_{X^n}(\seta^{X^n|Y^n})=1; \\
    &    \lim_{n\to \infty} p_{Y^n|X^n}(\seta^{Y^n|x^n}|x^n)=1, \quad \forall x^n \in \seta^{X^n|Y^n}.
\end{align}
\label{Lemma_typicalityCriterion}
\end{lemma}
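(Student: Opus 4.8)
The plan is to reduce both statements to a single elementary identity: the expected conditional mass of the sections equals the joint probability of $\seta^{X^nY^n}$. Writing $g_n(x^n)\equivalent p_{Y^n|X^n}(\seta^{Y^n|x^n}|x^n)$ for the conditional mass of the $x^n$-section, the map $x^n\mapsto g_n(x^n)$ is measurable with $0\le g_n\le 1$, and disintegrating $p_{X^nY^n}$ into the marginal $p_{X^n}$ and the Markov kernel $p_{Y^n|X^n}$ gives, via Fubini--Tonelli applied to the indicator of $\seta^{X^nY^n}$,
\begin{equation*}
\mathbb{E}_{X^n}\!\left[g_n(X^n)\right]=\int_{\setx^n} p_{Y^n|X^n}(\seta^{Y^n|x^n}|x^n)\, dp_{X^n}(x^n)=p_{X^nY^n}(\seta^{X^nY^n})\ge 1-\alpha(n).
\end{equation*}
The first step is to justify this rigorously for the Gaussian (hence standard-Borel) alphabet, where the regular conditional probability $p_{Y^n|X^n}$ exists and, under the absolute-continuity assumption of the typicality definition, $g_n$ is a well-defined measurable function. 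This identity is the engine for both claims.

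For the first limit I would argue directly, with no concentration inequality. Since $0\le g_n\le 1$ and $g_n(x^n)=0$ for every $x^n\notin\seta^{X^n|Y^n}$ by the very definition of $\seta^{X^n|Y^n}$ as the positivity set of $g_n$, we get
\begin{equation*}
1-\alpha(n)\le \mathbb{E}_{X^n}\!\left[g_n(X^n)\right]=\int_{\seta^{X^n|Y^n}} g_n\, dp_{X^n}\le p_{X^n}(\seta^{X^n|Y^n}),
\end{equation*}
and letting $n\to\infty$ forces $p_{X^n}(\seta^{X^n|Y^n})\to 1$.

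For the second limit I would apply Markov's inequality to the nonnegative function $1-g_n$. From $\mathbb{E}_{X^n}[1-g_n(X^n)]\le\alpha(n)$ one obtains, for every fixed $\epsilon>0$,
\begin{equation*}
p_{X^n}\!\left(\{x^n\in\seta^{X^n|Y^n}: g_n(x^n)<1-\epsilon\}\right)\le\frac{\alpha(n)}{\epsilon}\xrightarrow[n\to\infty]{}0,
\end{equation*}
so that $g_n(x^n)=p_{Y^n|X^n}(\seta^{Y^n|x^n}|x^n)$ tends to $1$ in $p_{X^n}$-probability on $\seta^{X^n|Y^n}$; equivalently, the section mass tends to $1$ for all $x^n\in\seta^{X^n|Y^n}$ lying outside an exceptional subset whose $p_{X^n}$-measure vanishes. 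This is precisely the form in which the second property feeds the Wyner--Ziv-type binning of Section \ref{direct}.

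The main obstacle is conceptual rather than computational, and it concerns exactly the quantifier in the second limit. A literally pointwise ``for every $x^n\in\seta^{X^n|Y^n}$'' reading is too strong for an arbitrary high-probability measurable $\seta^{X^nY^n}$: choosing each section $\seta^{Y^n|x^n}$ to carry conditional mass $c\in(0,1)$ on a set $B_n$ with $p_{X^n}(B_n)\to 0$ and full mass elsewhere keeps $\mathbb{E}_{X^n}[g_n(X^n)]\ge 1-\alpha(n)$, yet every $x^n\in B_n$ lies in the positivity set $\seta^{X^n|Y^n}$ while $g_n(x^n)=c\not\to 1$. Hence the content that is both true and needed downstream is the in-probability statement above, with the quantifier read modulo a $p_{X^n}$-asymptotically-negligible subset of $\seta^{X^n|Y^n}$. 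Making this precise is the only delicate point, and it rests entirely on the measure-theoretic step of the first paragraph, namely the existence and regularity of the disintegration $p_{Y^n|X^n}$ together with its absolute continuity, which guarantee that $g_n$ is well defined and that the Fubini identity holds.
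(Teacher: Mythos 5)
The paper contains no proof of Lemma \ref{Lemma_typicalityCriterion} to compare yours against: the lemma is imported verbatim from \cite{Generalized_typ}, and the paper's only ``proof'' is that citation. Judged on its own merits, your first paragraph and your proof of the first limit are correct and complete: the disintegration identity $\mathbb{E}_{X^n}[g_n(X^n)]=p_{X^nY^n}(\seta^{X^nY^n})$ holds on these standard Borel (Gaussian) alphabets, and since $g_n$ vanishes off the positivity set $\seta^{X^n|Y^n}$ and is bounded by one, $1-\alpha(n)\le\int_{\seta^{X^n|Y^n}}g_n\,dp_{X^n}\le p_{X^n}(\seta^{X^n|Y^n})$ gives the first property. (One small imprecision: measurability of $g_n$ needs only the existence of the regular conditional probability, not the absolute-continuity hypothesis of the typicality definition, which concerns the information density and plays no role in this lemma.)

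Your diagnosis of the second limit is the substantive contribution, and it is right: with $\seta^{X^n|Y^n}$ defined as the positivity set $\{x^n: g_n(x^n)>0\}$, the ``for all $x^n\in\seta^{X^n|Y^n}$'' claim is false under the sole hypothesis \eqref{eq:jointTypicality}. Your counterexample is sound and can be made fully rigorous for the non-degenerate Gaussian kernel ($|\rho|<1$): take $\seta^{Y^n|x^n}$ to be the conditional $c$-quantile half-line for $x^n\in B_n$ and all of $\sety^n$ otherwise, so every $x^n\in B_n$ lies in the positivity set with $g_n(x^n)=c$. Where I would push back is on the proposed repair. The standard formulation of such lemmas (and what the downstream arguments here implicitly assume) is not your in-probability reading but a threshold redefinition of the conditional set, $\seta^{X^n|Y^n}\equivalent\{x^n: g_n(x^n)\ge 1-\sqrt{\alpha(n)}\}$; your own Markov-inequality step then yields $p_{X^n}(\seta^{X^n|Y^n})\ge 1-\sqrt{\alpha(n)}\to 1$, while the second property holds pointwise on the set by construction. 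This distinction matters for the rest of the paper: Lemma \ref{lemma_cond_typicality} and the bounds on $\Pr\{\sete_2\}$ and $\Pr\{\sete_4\}$ in Section \ref{direct} integrate conditional-typicality estimates over \emph{every} $x^n\in\sett_\delta^{X^n|U^n}$, so they need the pointwise quantifier; with only the in-probability version one must additionally split off the exceptional subset and absorb its vanishing probability into the $\beta(n)$ terms. So: first property proved, second property correctly identified as defective as stated, but the fix you offer is the weaker of the two available and would force modifications downstream that the threshold fix avoids.
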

From Lemma \ref{Lemma_typicalityCriterion}, we obtain conditional typicality and conditional divergence
lemmas for the proposed generalised typicality criterion.
\begin{lemma}
Given a bivariate Gaussian memoryless source $p_{XY}$ we set 
\begin{align*}
&\sett_\delta^{Y^n|x^n}=\{y^n \in \sety^n,\ (x^n,y^n) \in \sett_{\delta}^{X^n Y^n}\}, \quad x^n \in \setx^n\\
&\sett_\delta^{X^n|Y^n}=\{x^n \in \setx^n,\ p_{Y^n|X^n}(\sett_\delta^{Y^n|x^n}|x^n)>0\}. \end{align*}
Then
\begin{align}
    &\lim_{n\to \infty} p_{X^n}(\sett_{\delta}^{X^n|Y^n})=1, \label{eq:TypCondition}\\
    &\lim_{n\to \infty} p_{Y^n|X^n}(\sett_\delta^{Y^n|x^n})=1, \quad \forall x^n \in \sett_{\delta}^{X^n|Y^n}.
\end{align}
\label{lemma_cond_typicality}
\end{lemma}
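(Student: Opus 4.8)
The plan is to reduce the whole statement to the already-established Lemma~\ref{Lemma_typicalityCriterion}. Concretely, I would identify the abstract high-probability set $\seta^{X^nY^n}$ appearing there with the typical set $\sett_\delta^{X^nY^n}$ here; then the sets $\sett_\delta^{Y^n|x^n}$ and $\sett_\delta^{X^n|Y^n}$ defined in the present statement coincide with $\seta^{Y^n|x^n}$ and $\seta^{X^n|Y^n}$, and the two claimed limits are exactly the two conclusions of Lemma~\ref{Lemma_typicalityCriterion}. Thus the only thing left to verify is the hypothesis \eqref{eq:jointTypicality}, namely that $p_{X^nY^n}(\sett_\delta^{X^nY^n})\to 1$.

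To establish this I would first exploit the memoryless structure of the source. Since $p_{Y^n|X^n}(y^n|x^n)=\prod_{i=1}^n p_{Y|X}(y_i|x_i)$ and $p_{Y^n}(y^n)=\prod_{i=1}^n p_Y(y_i)$, the Radon--Nikodym derivative factorizes and the information density becomes additive,
\begin{equation*}
    i_{X^nY^n}(x^n,y^n)=\sum_{i=1}^n i_{XY}(x_i,y_i).
\end{equation*}
Hence $\tfrac{1}{n}i_{X^nY^n}(X^n,Y^n)$ is the empirical mean of the i.i.d.\ random variables $i_{XY}(X_i,Y_i)$, whose common mean is $\mathbb{E}[i_{XY}(X,Y)]=I(X;Y)$. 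Applying the weak law of large numbers (WLLN) then yields convergence in probability of this empirical mean to $I(X;Y)$, so that
\begin{equation*}
    p_{X^nY^n}(\sett_\delta^{X^nY^n})=p_{X^nY^n}\!\left(\left|\tfrac{1}{n}i_{X^nY^n}-I(X;Y)\right|\leq\delta\right)\xrightarrow[n\to\infty]{}1,
\end{equation*}
which is precisely \eqref{eq:jointTypicality} with $\alpha(n)=1-p_{X^nY^n}(\sett_\delta^{X^nY^n})$; measurability of $\sett_\delta^{X^nY^n}$ is inherited from that of the information density.

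The hard part will be the integrability condition needed to invoke the WLLN: one must confirm that $i_{XY}(X,Y)$ has finite expectation, equivalently that $I(X;Y)<\infty$. For a bivariate Gaussian source the information density is a quadratic form in $(X,Y)$ and its mean is finite precisely when the correlation coefficient is strictly below one in magnitude; the perfectly correlated case ($I(X;Y)=\infty$) is implicitly excluded here, and is exactly the regime that later produces infinite CR capacity. Once finiteness of the mean is checked, Khinchin's WLLN applies with no further moment assumptions, the hypothesis of Lemma~\ref{Lemma_typicalityCriterion} is met, and both limits in the statement follow at once.
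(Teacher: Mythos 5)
Your proof follows essentially the same route as the paper's: both verify that $\sett_\delta^{X^nY^n}$ satisfies hypothesis \eqref{eq:jointTypicality} of Lemma~\ref{Lemma_typicalityCriterion} by applying the WLLN to the sum of i.i.d.\ single-letter information densities (whose mean is $I(X;Y)$), note measurability of the typical set, and then invoke that lemma's two conclusions. Your added observation that Khinchin's WLLN requires $I(X;Y)<\infty$, so the perfectly correlated case is implicitly excluded here, is a worthwhile detail that the paper's one-line proof leaves unstated.
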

\begin{proof}
$\sett_\delta^{X^nY^n}$ is $p_{X^nY^n}$-measurable because $i_{X^nY^n}$ is a measurable function. For i.i.d. sequence pairs $(X^n,Y^n)$, it follows from the Weak Law of Large Numbers (WLLN) that for any $\delta>0$
\begin{equation*}
\lim_{n\to \infty} \Pr\{|\frac{1}{n} i_{X^nY^n}(X^n,Y^n)- I(X;Y)|<\delta \}=1,
\end{equation*}
where $\mathbb{E}\left[\frac{1}{n} i_{X^nY^n}(X^n,Y^n)\right]=I(X;Y)$. Thus $\sett_\delta^{X^nY^n}$ satisfies condition \eqref{eq:jointTypicality}.
\end{proof}
\begin{lemma}
Given a bivariate Gaussian memoryless source $p_{XY}$, for all $n \in \mathbb{N}$ and $x^n \in \sett_{\delta}^{X^n|Y^n} $
\begin{align}
    & 2^{- n[{I}(X;Y)+\delta]} \leq p_{Y^n}(\sett_\delta^{Y^n|x^n}) \leq 2^{- n[{I}(X;Y)-\delta]} \\
    & 2^{- n[{I}(X;Y)+\delta]} \leq p_{X^n} p_{Y^n}(\sett_\delta^{Y^nX^n}) \leq 2^{-[ n({I}(X;Y)-\delta]},
\end{align}
where \begin{align*} p_{X^n} p_{Y^n}(\sett_\delta^{Y^nX^n})
=\int_{x^n \in \sett_\delta^{X^n|Y^n}} p_{Y^n}(\sett_\delta^{Y^n|x^n})dp_{X^n}(x^n).\end{align*}
\label{lemma_cond_divergence}
\end{lemma}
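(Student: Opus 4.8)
The plan is to translate the defining inequality of joint $\delta$-typicality into two-sided bounds on the Radon--Nikodym derivative $dp_{Y^n|X^n}(\cdot|x^n)/dp_{Y^n}(\cdot)$, and then pass from the conditional measure $p_{Y^n|X^n}(\cdot|x^n)$ to the marginal $p_{Y^n}$ by a change of measure over the slice $\sett_\delta^{Y^n|x^n}$. Concretely, by definition every $(x^n,y^n)\in\sett_\delta^{X^nY^n}$ satisfies $|\frac{1}{n}i_{X^nY^n}(x^n,y^n)-I(X;Y)|\le\delta$; exponentiating in base $2$ gives, for all $y^n\in\sett_\delta^{Y^n|x^n}$,
\begin{equation*}
2^{n[I(X;Y)-\delta]}\le \frac{dp_{Y^n|X^n}(y^n|x^n)}{dp_{Y^n}(y^n)}\le 2^{n[I(X;Y)+\delta]}.
\end{equation*}
Equivalently, on this slice $2^{-n[I(X;Y)+\delta]}\,dp_{Y^n|X^n}(y^n|x^n)\le dp_{Y^n}(y^n)\le 2^{-n[I(X;Y)-\delta]}\,dp_{Y^n|X^n}(y^n|x^n)$.

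First I would integrate these pointwise bounds over $y^n\in\sett_\delta^{Y^n|x^n}$ against the conditional law. Writing $P_c(x^n)\equivalent p_{Y^n|X^n}(\sett_\delta^{Y^n|x^n}|x^n)$, this yields
\begin{equation*}
2^{-n[I(X;Y)+\delta]}P_c(x^n)\le p_{Y^n}(\sett_\delta^{Y^n|x^n})\le 2^{-n[I(X;Y)-\delta]}P_c(x^n).
\end{equation*}
The upper bound of the lemma is then immediate from $P_c(x^n)\le 1$. For the lower bound, $P_c(x^n)$ is not identically $1$, so the clean constant $2^{-n[I(X;Y)+\delta]}$ is recovered by invoking Lemma \ref{lemma_cond_typicality}: for every $x^n\in\sett_\delta^{X^n|Y^n}$ one has $P_c(x^n)\to 1$, so up to a factor $1-\alpha(n)\to 1$ (which I would absorb into the bound, as is standard for typical-set estimates) the lower bound holds.

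Next, for the product-measure statement I would integrate the slice bounds against $dp_{X^n}$ over $\sett_\delta^{X^n|Y^n}$. The upper bound again uses $p_{X^n}(\sett_\delta^{X^n|Y^n})\le 1$. For the lower bound the key observation is that
\begin{equation*}
\int_{\sett_\delta^{X^n|Y^n}}P_c(x^n)\,dp_{X^n}(x^n)=p_{X^nY^n}(\sett_\delta^{X^nY^n}),
\end{equation*}
since $P_c(x^n)=0$ off $\sett_\delta^{X^n|Y^n}$; combined with the slice lower bound this gives $p_{X^n}p_{Y^n}(\sett_\delta^{Y^nX^n})\ge 2^{-n[I(X;Y)+\delta]}\,p_{X^nY^n}(\sett_\delta^{X^nY^n})$, and $p_{X^nY^n}(\sett_\delta^{X^nY^n})\to 1$ by the WLLN argument underlying Lemma \ref{lemma_cond_typicality}.

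I expect the main obstacle to be the measure-theoretic bookkeeping rather than any deep idea: one must justify that the pointwise Radon--Nikodym bounds hold $p_{Y^n}$- and $p_{Y^n|X^n}$-almost everywhere on the slice (using the absolute continuity of $p_{Y^n|X^n}$ w.r.t. $p_{Y^n}$ that makes $i_{X^nY^n}$ well defined), and keep the integration-against-a-density manipulations valid in the continuous setting where no counting or cardinality argument is available. The second delicate point is that the lower bound genuinely depends on $P_c(x^n)\to 1$: unlike the upper bound it is not a uniform-in-$n$ inequality, and to be rigorous I would state it asymptotically, or equivalently carry the $1-\alpha(n)$ factor explicitly.
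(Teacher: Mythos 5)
Your proof is correct and is precisely the standard change-of-measure argument that the paper delegates to the cited generalized-typicality reference rather than writing out: exponentiate the typicality condition to sandwich the Radon--Nikodym derivative on the slice, then integrate against $p_{Y^n|X^n}(\cdot|x^n)$ and $p_{X^n}$. Your caveat about the lower bounds is well taken and worth keeping explicit: they only come with the extra factors $p_{Y^n|X^n}(\sett_\delta^{Y^n|x^n}|x^n)$ and $p_{X^nY^n}(\sett_\delta^{X^nY^n})$, which tend to one but need not equal one for fixed $n$ (membership in $\sett_\delta^{X^n|Y^n}$ only guarantees positivity), so the lemma as printed, asserting the clean constants for all $n\in\mathbb{N}$, is slightly stronger than what the argument — or the downstream use in the direct proof, which only needs the asymptotic form — actually requires.
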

\begin{proof}
The proof is similar to the proof in \cite[Lemma 3]{Generalized_typ}.
\end{proof}
\section{System Model, Definitions and Main Result}
\label{systemmodelanddefinitions}
In this section, we introduce our system model and propose a single-letter characterization of the CR capacity for the scenario presented in Fig. \ref{fig:System}.
\subsection{System Model}
\label{systemmodel}
Let a bivariate Gaussian memoryless source $p_{XY}$ with two components, with  generic variables $X$ and $Y$ on alphabets $\mathcal{X} \subseteq \mathbb{R}$ and $\mathcal{Y}\subseteq \mathbb{R}$, correspondingly, be given.
The outputs of $X$ are observed only by Terminal $A$ and those of $Y$ only by Terminal $B$. Both outputs have length $n.$ We further assume that the joint distribution of $(X,Y)$ is known to both terminals. Terminal $A$
can send information to Terminal $B$ over a memoryless channel $W.$ The Shannon capacity of the channel $W$ is denoted by $C(W)$. There are no other resources available to any of the terminals.  \\
A CR-generation protocol \cite{part2} of block length $n$ consists of:
\begin{enumerate}
    \item a function $\Phi$ that maps $X^n$ into a random variable $K$ with alphabet $\mathcal{K}$ generated by Terminal $A$,
    \item a function $\Lambda$ that maps $X^n$ into the input sequence $T^n$
    \item a function $\Psi$ that maps $Y^n$ and the output sequence $Z^n$ into a random variable $L$ with alphabet $\mathcal{K}$ generated by Terminal $B$.
\end{enumerate}
This protocol generates a pair of random variable $(K,L)$ that is called permissible \cite{part2} if $K$ and $L$ are functions of the resources available at Terminal $A$ and Terminal $B$, respectively.
\begin{equation}
    K=\Phi(X^{n}), \ \     L=\Psi(Y^{n},Z^{n}).
    \label{KLSISOcorrelated}
\end{equation}
The system model is depicted in Fig. \ref{fig:System}.
\begin{figure}[htb!]
\centering
\tikzstyle{block} = [draw, rectangle, rounded corners,
minimum height=2em, minimum width=2cm]
\tikzstyle{blockchannel} = [draw, top color=white, bottom color=white!80!gray, rectangle, rounded corners,
minimum height=1cm, minimum width=.3cm]
\tikzstyle{input} = [coordinate]
\usetikzlibrary{arrows}
\scalebox{.9}{
\begin{tikzpicture}[scale= 1,font=\footnotesize]
\node[blockchannel] (source) {$P_{XY}$};
\node[blockchannel, below=2.4cm of source](channel) { memoryless channel};
\node[block, below left=2.4cm of source] (x) {Terminal $A$};
\node[block, below right=2.4cm of source] (y) {Terminal $B$};
\node[above=1cm of x] (k) {$K=\Phi(X^n)$};
\node[above=1cm of y] (l) {$L=\Psi(Y^n,Z^n)$};

\draw[->,thick] (source) -- node[above] {$X^n$} (x);
\draw[->, thick] (source) -- node[above] {$Y^n$} (y);
\draw [->, thick] (x) |- node[below right] {$T^n=\Lambda(X^n)$} (channel);
\draw[<-, thick] (y) |- node[below left] {$Z^n$} (channel);
\draw[->] (x) -- (k);
\draw[->] (y) -- (l);

\end{tikzpicture}}
\caption{Bivariate Gaussian memoryless source model with one-way communication over a memoryless channel}
\label{fig:System}
\end{figure}
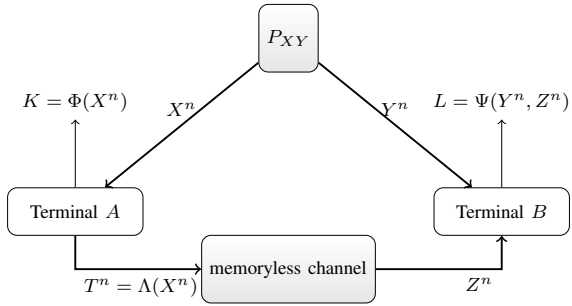
\subsection{Definitions and Main Result}
In this section, we provide the definition of an achievable CR rate and present the main result of the paper.
\begin{definition}  A number $H$ is called an achievable CR rate if there exists a non-negative constant $c$ such that for every $\epsilon>0$ and $\gamma>0$ and for sufficiently large $n$ there exists a permissible  pair of random variables $(K,L)$ such that
\begin{equation}
    \Pr\{K\neq L\}\leq \epsilon, 
    \label{errorcorrelated}
\end{equation}
\begin{equation}
    |\mathcal{K}|\leq 2^{cn},
    \label{cardinalitycorrelated}
\end{equation}
\begin{equation}
    \frac{1}{n}H(K)> H-\gamma.
     \label{ratecorrelated}
\end{equation}
\end{definition}
\begin{definition} 
The CR capacity $C_{CR}(p_{XY},W)$ is the maximum achievable CR rate.
\end{definition}
\begin{theorem}
For the model in Fig \ref{fig:System}, the CR capacity $C_{CR}(p_{XY},W)$ is equal to
\begin{equation}
    C_{CR}(p_{XY},W)= \underset{\substack{U \\{\substack{U \circlearrow{X} \circlearrow{Y}\\ I(U;X)-I(U;Y) \leq C(W)}}}}{\max} I(U;X).
    \label{eq:capacity}
\end{equation}
\label{main theorem}
\end{theorem}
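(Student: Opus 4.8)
The plan is to establish Theorem~\ref{main theorem} by separately proving achievability and a matching converse, using the information-density typicality of \eqref{eq:criterion} together with Lemma~\ref{lemma_cond_typicality} and Lemma~\ref{lemma_cond_divergence} in place of the strong-typicality estimates of the discrete theory, so that the usual covering and packing arguments survive over the continuous alphabets.

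\emph{Achievability.} I would fix an auxiliary $U$ with $U\circlearrow{X}\circlearrow{Y}$ and, initially, $I(U;X)-I(U;Y)<C(W)$, deferring the boundary case to a final continuity argument. Following the Wyner--Ziv binning idea, generate a codebook of about $2^{nI(U;X)}$ sequences $u^n$ drawn i.i.d.\ from $p_U$ and distribute them uniformly into about $2^{nR_0}$ bins with $I(U;X)-I(U;Y)<R_0<C(W)$. Terminal~$A$ picks a codeword $u^n$ jointly $\delta$-typical with its observation $X^n$, sets the common random variable $K$ to its index, and sends the bin index over $W$ through a capacity-achieving transmission code, producing $T^n=\Lambda(X^n)$. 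Terminal~$B$ decodes the bin index from $(Y^n,Z^n)$—reliable since $R_0<C(W)$—and then looks for the unique codeword in that bin jointly $\delta$-typical with $Y^n$, which it outputs as $L$. The error decomposes into a covering failure at~$A$, a channel-decoding failure of the bin index, and a packing failure at~$B$: the first is small because the codebook exceeds $2^{nI(U;X)}$ codewords, the second because $R_0<C(W)$, and the third because each bin holds about $2^{nI(U;Y)}$ codewords while, by the conditional-divergence estimate of Lemma~\ref{lemma_cond_divergence}, a wrong codeword is jointly typical with $Y^n$ with probability only about $2^{-nI(U;Y)}$; Lemma~\ref{lemma_cond_typicality} guarantees the relevant typical sets carry probability tending to one. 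This yields \eqref{errorcorrelated}, the choice $c=I(U;X)$ for \eqref{cardinalitycorrelated}, and $\tfrac1n H(K)\to I(U;X)$ for \eqref{ratecorrelated}.

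\emph{Converse.} Let $(K,L)$ be any permissible pair satisfying \eqref{errorcorrelated}--\eqref{ratecorrelated}. Since $K=\Phi(X^n)$ is a function of $X^n$, $H(K)=I(K;X^n)$; Fano's inequality applied to \eqref{errorcorrelated} gives $H(K\mid L)\leq 1+\epsilon cn$, and as $L=\Psi(Y^n,Z^n)$ the data-processing inequality yields
\begin{equation*}
H(K)\leq I(K;Y^n)+I(K;Z^n\mid Y^n)+1+\epsilon cn.
\end{equation*}
Because the source is memoryless and $Z^n$ depends on $X^n$ only through $T^n=\Lambda(X^n)$, one has $I(K;Z^n\mid Y^n)\leq I(X^n;Z^n\mid Y^n)=H(Z^n\mid Y^n)-H(Z^n\mid T^n)\leq I(T^n;Z^n)\leq nC(W)$, the last inequality being the memoryless-channel converse. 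Hence
\begin{equation*}
I(K;X^n)-I(K;Y^n)\leq nC(W)+1+\epsilon cn.
\end{equation*}
Setting $U_i=(K,X^{i-1},Y_{i+1}^n)$ and invoking the Csisz\'ar sum identity rewrites the left-hand side as $\sum_{i=1}^n\bigl[I(U_i;X_i)-I(U_i;Y_i)\bigr]$, while the i.i.d.\ property gives $\sum_{i=1}^n I(U_i;X_i)\geq I(K;X^n)=H(K)$ and the Markov chains $U_i\circlearrow{X_i}\circlearrow{Y_i}$. With a time-sharing variable $Q$ uniform on $\{1,\dots,n\}$ and $U=(U_Q,Q)$, $X=X_Q$, $Y=Y_Q$, this produces a single pair with $U\circlearrow{X}\circlearrow{Y}$, $I(U;X)\geq\tfrac1n H(K)$ and $I(U;X)-I(U;Y)\leq C(W)+\tfrac1n+\epsilon c$. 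Combined with \eqref{ratecorrelated} this gives $H<I(U;X)+\gamma$, and letting $\epsilon\to0$, $\gamma\to0$ and $n\to\infty$ bounds $H$ by the maximum in \eqref{eq:capacity}.

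\emph{Main obstacle.} I expect the real work to lie in two places. On the direct side, the covering and packing lemmas must be re-derived over the continuous alphabet; this is exactly what the information-density formulation \eqref{eq:criterion} and the measure estimates of Lemmas~\ref{lemma_cond_typicality} and~\ref{lemma_cond_divergence} are meant to provide, and care is needed so that the probability bounds $2^{\pm nI}$ hold uniformly. On the converse side, the auxiliary $U$ extracted from the code meets the rate constraint only up to an $o(1)$ slack, so the delicate step is the limiting argument: one must show that the constrained maximum in \eqref{eq:capacity} is continuous in the level $C(W)$ and attained, so that the excess slack can be absorbed as $n\to\infty$.
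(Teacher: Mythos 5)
Your outline for the generic case coincides with the paper's own proof: the achievability is the same Wyner--Ziv-type binning with a codebook of roughly $2^{nI(U;X)}$ sequences, about $2^{n[I(U;X)-I(U;Y)]}$ bins sent over $W$ with a capacity-achieving transmission code, and covering/packing errors controlled by the information-density typicality of \eqref{eq:criterion} via Lemmas~\ref{lemma_cond_typicality} and~\ref{lemma_cond_divergence}; the converse is the same chain of Fano's inequality, the data-processing bound $I(K;Z^n\mid Y^n)\leq I(T^n;Z^n)\leq nC(W)$, the Csisz\'ar sum identity, and a time-sharing auxiliary variable. So for $I(X;Y)<\infty$ your proposal is essentially the paper's argument.

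The genuine gap is that you never treat the perfectly correlated case $|\rho|=1$, i.e.\ $I(X;Y)=-\tfrac12\log(1-\rho^2)=+\infty$, where the right-hand side of \eqref{eq:capacity} is itself infinite and the theorem asserts that \emph{every} finite rate is achievable. Your scheme cannot be invoked there as stated: the typicality set $\sett_\delta^{X^nY^n}$ in \eqref{eq:criterion} is defined by $|\tfrac1n i_{X^nY^n}-I(X;Y)|\leq\delta$ and is meaningless when $I(X;Y)=\infty$, so the error event ``$(X^n,Y^n)$ not jointly typical'' and the whole covering/packing analysis built on it no longer apply. The paper isolates this case and proves it with a different, elementary construction: since $Y=g(X)$ almost surely for a linear $g$, Terminal $A$ applies the probability integral transform $\tilde X=F((X-\mu_X)/\sigma_X)$ to turn a \emph{single} source symbol into a uniform variable on $(0,1)$, discretizes it into $|\mathcal{K}|=2^{nc}$ values for an arbitrary $c>0$, and Terminal $B$ reproduces $K$ exactly via $\Psi=\Phi\circ g^{-1}$ with zero error and no channel use, giving $\tfrac1n H(K)=c$ for any $c$. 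You need either this separate argument or a justification that your binning survives with $U$ of finite $I(U;X)$ but arbitrarily large rate when the pair $(X,Y)$ is degenerate; as written, the proposal proves the theorem only when $I(X;Y)$ is finite.
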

In contrast to the discrete case in \cite{part2,globecom}, note that the CR capacity can reach infinity \cite{cover}. If the $I(X,Y)=+\infty$, then the single-letter characterization in \eqref{eq:capacity} can be reduced to the following form:
\begin{align*}
  C_{CR}(p_{XY},W) &= \underset{\substack{U \\{\substack{U \circlearrow{X} \circlearrow{Y}}}}}{\max} I(U;X)  = +\infty.
\end{align*}
If $I(X,Y)=+\infty$, then $Y$ is a linear function of $X$ with probability one \cite{correlationreference}. This implies that $I(U;X)-I(U;Y)=0$ with probability one.
\begin{remark}In our model, we distinguish two sources of randomness. The first one is obtained from the correlated source $p_{XY}$ and the second one by communicating over the channel $W$.
When the two continuous random variables $X$ and $Y$ are perfectly correlated, we can achieve infinite CR capacity without communicating over the channel, since the joint distribution of $(X,Y)$ is known to both terminals.
\end{remark}

\section{Direct Proof of Theorem \ref{main theorem}}
\label{direct}
In this section, we provide the direct proof of Theorem \ref{main theorem}. We distinguish two cases. The first one is when the $X$ and $Y$ are perfectly correlated, i.e., the mutual information $I(X;Y)$ is infinite. The second one is when $I(X;Y)$ is finite. In the latter case, we can use the typicality criteria presented in Section \ref{subsec: typicality}.
\subsection{$I(X,Y)$ is Infinite} 
We recall that $p_{XY}$ is a bivariate Gaussian source. The mutual information $I(X,Y)$ is given by
\begin{equation*}
    I(X,Y)=-\frac{1}{2}\log(1-\rho^2),
\end{equation*}
where $\rho$ is the correlation coefficient between $X$ and $Y$. That means $I(X,Y)=+\infty$ iff $|\rho|=1$, i.e., $X$ and $Y$ are perfectly correlated. In such a situation, $Y$ is a linear function of $X$ with probability one \cite{correlationreference}. We set
\begin{equation*}
    Y=g(X),
\end{equation*}
where $g \colon \setx \longrightarrow \sety$ is a linear function.
Therefore, almost surely,
we do not need to communicate over the channel. Since $X$ and $Y$ are perfectly correlated, we can achieve infinite CR capacity without sending any information over the channel. We prove that it is sufficient that the terminals $A$ and $B$ observe one symbol $X$ and $Y$, respectively. In the following, we first prove the existence of a function $\Phi$ that converts the Gaussian RV $X$ to the RV $K$ uniformly distributed on $\mathcal{K}=\{1,2,\ldots, |\mathcal{K}|\}$. It is worth noting that we do not pay any price for the uniformity. We can convert a random experiment with a Gaussian distribution to another one with uniform distribution with zero error probability.
 \begin{lemma}
Assume $X$ has a normal distribution with mean $\mu_X$ and variance $\sigma^2_X>0$. We denote by $F$ the cumulative distribution function of the standard normal distribution.
Let for $\sigma^2>0$ the RV $\tilde{X}$ be defined as $\tilde{X}=F(\frac{X-\mu_X}{\sqrt{\sigma^2_X}})$. $\tilde{X}$ is uniformly distributed on $(0,1)$.
\label{lemma_conversion}
\end{lemma}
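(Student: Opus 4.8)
The statement to prove is Lemma~\ref{lemma_conversion}: if $X \sim \mathcal{N}(\mu_X, \sigma_X^2)$ with $\sigma_X^2 > 0$ and $F$ is the standard normal CDF, then $\tilde{X} = F\bigl(\frac{X-\mu_X}{\sqrt{\sigma_X^2}}\bigr)$ is uniform on $(0,1)$. This is a clean instance of the probability integral transform, so the plan is to reduce it to that classical fact and verify the hypotheses carefully.

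First I would standardize: set $Z = \frac{X-\mu_X}{\sqrt{\sigma_X^2}}$. Since $X$ is Gaussian with variance $\sigma_X^2 > 0$, $Z$ is a standard normal random variable $\mathcal{N}(0,1)$, and $F$ is precisely its CDF. Thus $\tilde{X} = F(Z)$, and it suffices to show $F(Z)$ is uniform on $(0,1)$ whenever $Z$ has continuous CDF $F$. The essential structural fact I would invoke is that $F$ is continuous and strictly increasing on $\mathbb{R}$: continuity holds because the standard normal density is everywhere finite (no atoms), and strict monotonicity holds because that density is strictly positive on all of $\mathbb{R}$. These two properties let me treat $F$ as a genuine bijection from $\mathbb{R}$ onto the open interval $(0,1)$, with a well-defined continuous inverse $F^{-1}\colon (0,1) \to \mathbb{R}$.

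The main computation is then the distribution-function calculation. For any $u \in (0,1)$ I would write
\begin{align*}
\Pr\{\tilde{X} \leq u\} &= \Pr\{F(Z) \leq u\} = \Pr\{Z \leq F^{-1}(u)\} = F\bigl(F^{-1}(u)\bigr) = u,
\end{align*}
where the second equality uses strict monotonicity of $F$ (so that $F(Z)\leq u \iff Z \leq F^{-1}(u)$) and the last uses that $F$ and $F^{-1}$ are mutual inverses. For $u \leq 0$ the event is empty so the probability is $0$, and for $u \geq 1$ it is the whole space so the probability is $1$. Hence the CDF of $\tilde{X}$ is exactly that of the uniform distribution on $(0,1)$, which identifies $\tilde{X}$ as $\mathrm{Unif}(0,1)$.

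The only delicate point, and the step I would be most careful about, is justifying the equivalence $F(Z) \leq u \iff Z \leq F^{-1}(u)$ for $u$ in the open unit interval; this is exactly where strict monotonicity and continuity of the Gaussian CDF are needed, since for a general distribution with flat regions or jumps the probability integral transform can fail or require the generalized inverse. Because $F$ here is the standard normal CDF, no such pathology arises, so the argument is clean. I would close by remarking, consistently with the surrounding text, that the transformation is deterministic and invertible, so $X$ is converted to a uniform variable with zero error probability and no loss.
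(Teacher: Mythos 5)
Your proof is correct and complete. The paper itself does not spell out an argument for this lemma---it only states that the proof is analogous to Lemma~7 of the cited ISIT paper---so your write-up supplies exactly the standard probability integral transform argument that such a reference presupposes: standardize to $Z\sim\mathcal{N}(0,1)$, use that the Gaussian CDF $F$ is continuous and strictly increasing (hence a bijection onto $(0,1)$ with inverse $F^{-1}$), and compute $\Pr\{F(Z)\leq u\}=F(F^{-1}(u))=u$ for $u\in(0,1)$. You correctly isolate the one point that needs care, namely that the equivalence $F(Z)\leq u \iff Z\leq F^{-1}(u)$ relies on strict monotonicity and continuity, which hold here because the standard normal density is everywhere positive and finite.
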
 
The proof of Lemma \ref{lemma_conversion} is analogous to the proof of \cite[Lemma 7]{isit_paper}. We then discretize $\tilde{X}$ using the function $d$ as described in \cite{isit_paper}. 
\begin{align*}
    d& \colon (0,1) \longrightarrow \{1,2,\ldots,|\mathcal{K}|\} \\
    & \colon \tilde{x} \mapsto k,\quad k \in \mathcal{K}.
\end{align*}
We set 
\begin{align*}
\Phi & \colon \mathbb{R}^n\longrightarrow \{1,2,\ldots, |\mathcal{K}|\}, \\
    & \colon x\mapsto {d\circ F}(x).
\end{align*}
 We set $|\mathcal{K}|=2^{nc},\quad c>0$. Thus condition \eqref{cardinalitycorrelated} is satisfied. Let $\Psi=\Phi \circ g^{-1}$. If $K=\Phi(X)$, then
\begin{align*}
    L& =\Psi(Y)\\
    & ={\Phi \circ g^{-1}}(g(X))\\
    & =K.
\end{align*}
Thus, \eqref{errorcorrelated} is satisfied. Now, we want to compute the entropy of $K$.
\begin{align*}
H(K)&= \log({|\mathcal{K}|})\\
&=nc,\quad c>0\\
&=n H.
\end{align*}
Since the constant $c$ can be chosen arbitrarily, then  \eqref{ratecorrelated} is satisfied for any positive $H.$  Thus, we have proved that any CR rate is achievable. This implies that the CR capacity is infinite in this case. This completes the proof.
\subsection{$I(X,Y)$ is Finite}
We consider the same code construction as used in \cite{part2} based on the same type of binning as for the Wyner-Ziv problem. Let $\epsilon,\gamma>0.$
Let $U$ be an arbitrary random variable on $\setu$ satisfying $U \circlearrow{X} \circlearrow{Y}$ and  $I(U;X)-I(U;Y)< C(W)$.
We are going to show that $H=I(U;X)$ is an achievable CR rate.

Let $p_{U|X}$ be a “channel” from $X$ to $U$. \\
{\bf{Code Construction}}: We generate $N_1N_2$ codewords $u^n(i,j),\quad i=1,\ldots,N_1,\ j=1,\ldots,N_2$ by choosing the $n.(N_1N_2)$ symbols $u_l(i,j)$ independently at random using $p_U$ (computed from $p_{XU}$). Each realization $u^n_{i,j}$ of $U^n_{i,j}$ is known to both terminals. For some $\delta>0$, let
{{\begin{align}
N_{1}&=2^{\left(n[I(U;X)-I(U;Y)+4\delta]\right)} \nonumber\\
N_{2}&=2^{\left(n[I(U;Y)-2\delta]\right)}. \nonumber
\end{align}}}
{\bf{Encoder}}: Let $(x^n,y^n)$ be any realization of $(X^n,Y^n)$. Given $x^n$ with $(x^n,y^n) \in \sett_\delta^{X^nY^n}$, try to find a pair $(i,j)$ such that $\left(x^n,u^n(i,j) \right) \in \sett_{\delta}^{X^nU^n}$ and $u^n(i,j) \in \sett_\delta^{U^n|Y^n}$. If successful, let $f(x^n)=i$. If no such ${u}^n(i,j)$ exists, then $f({x}^n)=N_1+1$ and $\Phi({x}^n)$ is set to a constant sequence ${u}^n_0$ different from all the ${u}^n(i,j)$s and known to both terminals. We choose $\delta$ to be sufficiently small such that
 \begin{align}
     \frac{\log \lVert f \rVert}{n}&=\frac{\log(N_1+1)}{n} \nonumber \\
     &\leq C(W)-\delta', \ \delta'>0,
     \label{inequalitylogfSISO}
      \end{align}
where $\lVert f \rVert$ refers to the cardinality of the set of messages $\{{i}^{\star}=f({x}^n)\}$. The message $i^{\star}=f({x}^n)$, with $i^{\star}\in\{1,\hdots,N_1+1\}$, is encoded to a sequence ${t}^n$ using a suitable \textit{forward error correcting code} with rate $\frac{\log \lVert f \rVert}{n}$ satisfying \eqref{inequalitylogfSISO} and with error probability not exceeding $\frac{\epsilon}{2}$ for sufficiently large $n$. The sequence ${t}^n$ is sent over the channel $W$. \\
{\bf Decoder}: Let ${z}^n$ be the channel output sequence. Terminal $B$ decodes the message $\hat{i}^{\star}$ from the knowledge of ${z}^n$. Given $\hat{i}^{\star}$ and $y^n$, try to find $\tilde{j}$ such that $\left(y^n,u^n(\hat{i}^{\star},\tilde{j})\right) \in \sett_{\delta}^{Y^nU^n}$. If successful, let $L({y}^n,\hat{i}^{\star})={u}^n(\hat{i}^{\star},\tilde{j})$. If there is no such ${u}^n(\hat{i}^{\star},\tilde{j})$ or there are several, $L$ is set to ${u}^n_0$ (since $K$ and $L$ must have the same alphabet).\\
{\bf Error Analysis}: We consider the following error events. \\
\textbullet \ $\sete_1:=\left\{ (X^n, Y^n) \notin \sett_\delta^{X^nY^n} \right\}$.\\
   \textbullet \ Suppose that $(x^n,y^n) \in \sett_\delta^{X^nY^n}$ but the encoder cannot find a pair $(i,j)$ such that $\left(x^n,u^n(i,j)\right) \in \sett_\delta^{X^nU^n} $ and $u^n \in \sett_\delta^{U^n|Y^n}$,\\
  $ \sete_2:= \bigcap_{\substack{i=1,\ldots,N_1 \\j=1,\ldots,N_2}} \big\{ \left(X^n,U^n(i,j)\right) \notin \sett_\delta^{X^nU^n}  \cup U^n(i,j) \notin \sett_\delta^{U^n|Y^n} \big\}$. \\
   \textbullet \  Suppose that $(x^n,y^n) \in \sett_\delta^{X^nY^n}$ and the encoder finds a pair $(i,j)$ such that $\left(x^n,u^n(i,j)\right) \in \sett_\delta^{X^nU^n} $ with $u^n(i,j) \in \sett_{\delta}^{U^n|Y^n}$. However, the decoder finds $\tilde{j}\neq j$ such that 
    $\left(y^n,u^n(\hat{i},\tilde{j})\right) \in \sett_\delta^{Y^nU^n} $,\\
    $\sete_3:=\cup_{\substack{\tilde{j}=1,\ldots,N_2 \\ \tilde{j}\neq j}} \left\{ \left(Y^n,U^n(\hat{i},\tilde{j})\right) \in \sett_\delta^{Y^nU^n} \right\}$.\\
    \textbullet \ Suppose that $(x^n,y^n) \in \sett_\delta^{X^nY^n}$ and the encoder finds a pair $(i,j)$ such that $\left(x^n,u^n(i,j)\right) \in \sett_\delta^{X^nU^n} $ with $u^n(i,j) \in \sett_{\delta}^{U^n|Y^n}$. However, the decoder cannot find ${j}$ such that 
    $\left(y^n,u^n(\hat{i},{j})\right) \in \sett_\delta^{Y^nU^n} $, \\
    $\sete_4:= \bigg\{\cap_{j=1,\ldots,N_2}\left\{ \left(Y^n,U^n(\hat{i},{j})\right) \notin \sett_\delta^{Y^nU^n} \right\}\bigg\} \bigcap \sete_2^c$.

     We denote by $P_e$ the probability of the overall error event. It follows from the union bound that 
 \begin{equation*}
  P_e \leq \Pr\{\sete_1\}+\Pr\{\sete_2\}+\Pr\{\sete_3\}+\Pr\{\sete_4\}.\end{equation*}
In the following, we compute an upper-bound on the overall error probability.
\begin{align*}
   \Pr\{\sete_1\}&=p_{XY}^n\left((\sett_\delta^{X^nY^n})^c \right) \\
   &=1-p_{XY}^n\left(\sett_\delta^{X^nY^n}\right)\\
    &\overset{(a)}{\leq}\beta_1(n),\quad \lim_{n\to\infty}\beta_1(n)=0,
\end{align*}
where $(a)$ follows from Lemma \ref{Lemma_typicalityCriterion} as $p_{XY}^n(\sett_\delta^{X^nY^n})$ satisfies condition \eqref{eq:TypCondition} w.r.t. the typicality criterion in \eqref{eq:criterion}.

\begin{align*}
    \Pr\{\sete_3\}& \overset{(a)}{\leq} \sum_{\tilde{j}\neq j} \Pr\left\{\left(Y^n,U^n(\hat{i},\tilde{j})\right) \in \sett_\delta^{Y^nU^n}\right \} \\
    & \overset{(b)}{<} N_2 \cdot 2^{-n(I(U,Y)+\delta)} \\
    & = 2^{-n \delta},\quad \beta_3(n):=2^{-n \delta},
\end{align*}
where $(a)$ follows from the union bound and $(b)$ follows from Lemma \ref{lemma_cond_divergence}. $p_{UY}$ can be computed from $p_{U|X}$ and $p_{XY}$.
\begin{align*}
    p_{UY}(u^n,y^n)&  =\int_{x^n \setx^n} p_{U|XY}^n(u^n|x^n,y^n) p_{XY}^n(x^n,y^n) dx^n \\
    & \overset{(a)}{=}\int_{x^n \setx^n} p_{U|X}^n(u^n|x^n,y^n) p_{XY}^n(x^n,y^n) dx^n.
\end{align*}
$(a)$ follows because $U\circlearrow X\circlearrow Y$ forms a Markov chain.
We compute an upper-bound for $\Pr\{\sete_4\}$.
\begin{align*}
    \Pr\{\sete_4 \}& =
\Pr \bigg\{\cap_{j=1,\ldots,N_2}\left\{ \left(Y^n,U^n(\hat{i},\tilde{j})\right) \notin \sett_\delta^{Y^nU^n} \right\} \\
&\quad \bigcap \sete_2^c\bigg\} \\
    &\leq  \Pr\bigg\{ \bigcap_{j=1,\ldots,N_2} \Big\{\left(Y^n,U^n(\hat{i},{j})\right) \notin \sett_\delta^{Y^nU^n} \\
    &\quad \cap U^n(\hat{i},{j}) \in \sett_\delta^{U^n|Y^n}\Big\} \bigg\} \\
    &\leq \beta_4(n),\quad \lim_{n \to \infty} \beta_4(n)=0.
\end{align*}
Now, we compute an upper-bound for $\Pr\{\sete_2\}$.
\begin{align*}
&\Pr\{\sete_2\}\\
    &=\int_{x^n \in \setx^n} p_{X^n}(x^n)\Pr\{\sete_2|X^n=x^n\} dx^n \\
    &=\int_{\substack{x^n \notin \sett_\delta^{X^n|U^n}}} p_{X^n}(x^n) \Pr\{\sete_2|X^n =x^n\} dx^n \\
    &+\int_{\substack{x^n \in \sett_\delta^{X^n|U^n}}} \Pr\bigg\{\bigcap_{\substack{i=1,\ldots,N_1 \\j=1,\ldots,N_2}} \left(x^n,U^n(i,j)\right) \notin \sett_\delta^{X^n U^n}\\ & \quad \cup U^n(i,j)\notin \sett_\delta^{U^n|Y^n}|X^n=x^n\bigg\}p_{X^n}(x^n)dx^n \\
    & {\leq} p_{X}^n\left((\sett_\delta^{X^n|U^n})^c\right)\\
    &+ \int_{\substack{x^n \in \sett_\delta^{X^n|U^n}}} \Pr\bigg\{\bigcap_{\substack{i=1,\ldots,N_1 \\j=1,\ldots,N_2}} U^n(i,j) \notin \sett_\delta^{U^n|X^n}\\ 
    & \quad \cup U^n(i,j)\notin \sett_\delta^{U^n|Y^n}|X^n=x^n\bigg\}p_{X^n}(x^n)dx^n \\
    &\overset{(a)}{\leq} \beta(n) + \int_{\substack{x^n \in \sett_\delta^{X^n|U^n}}} p_{X^n}(x^n) \prod_{\substack{i=1,\ldots,N_1 \\j=1,\ldots,N_2}} \bigg( \Pr\big\{U^n(i,j) \notin \\
    & \sett_\delta^{U^n|X^n}|X^n=x^n\big\} + \Pr\big\{U^n(i,j) 
    \notin \sett_\delta^{U^n|Y^n}|X^n=x^n\big\}\bigg) dx^n\\
    &\overset{(b)}{\leq} \beta(n)+ \int_{\substack{x^n \in \sett_\delta^{X^n|U^n}}}  \left(1-2^{-n(I(U,X)+\delta)}+\beta'(n)\right)^{N_1N_2} \\
    &\quad p_{X^n}(x^n) dx^n \\
    & \overset{(c)}{\leq} \beta(n)+  \exp\left(-2^{n(-I(U,X)-\delta)}-\beta'(n)\right)^{N_1N_2}\\
    & \leq \beta(n)+ \exp(\left(-2^{n(-I(U,X)-\delta)}\right)^{N_1N_2}\cdot \exp(-\beta'(n))^{N_1N_2}\\
    &\leq \beta_2(n), \quad \lim_{n\to\infty}\beta_2(n)\overset{(d)}{=}0,
\end{align*}
where $(a)$ follows because the $N_1N_2$ events of the intersection are independent and from Lemma \ref{lemma_cond_typicality}, $(b)$ follows from Lemma \ref{lemma_cond_typicality} and Lemma \ref{lemma_cond_divergence} with $\lim_{n\to \infty}\beta'(n)=0$, $(c)$ follows because $(1-x)^m\leq \exp(-mx)$ and $(d)$ follows because $\lim_{n\to\infty}\beta(n)=0$ and $\frac{1}{n}\log(N_1N_2)>I(U,X)$.
Therefore, for large sufficiently $n$
\begin{equation*}
  P_e \leq \sum_{i=1}^{4} \beta_i(n) \leq \frac{\epsilon}{2}.
\end{equation*}
Now, we are going to show that  $(K,L)$ satisfies \eqref{errorcorrelated}, \eqref{cardinalitycorrelated} and \eqref{ratecorrelated}.
Clearly, (\ref{cardinalitycorrelated}) is satisfied  for $c=2(H(X)+1)$, $n$ sufficiently large:
{{\begin{align}
|\mathcal{K}|&=N_1 N_2+1 \nonumber \\
             &= 2^{(n\left[I(U;X)+\delta\right])}+1 \nonumber \\
             &\leq 2^{(2n\left[I(U;X)+\delta \right])}.\nonumber
\end{align}}}
For a fixed $u^n(i,j) \in \setu^n$, we compute the following probability.
\begin{align*}
    &\Pr\{K={u}^n(i,j)\} \\&=\int_{{x}^n\in\sett_\delta^{X^n|U^n}}\Pr\{K={u}^n(i,j)|X^n={x}^n\}p_{X}^n({x}^n) dx^n \\
&\quad+\int_{{x}^n\in(\sett_\delta^{X^n|U^n})^c}\Pr\{K={u}^n(i,j)|X^n={x}^n\}p_{X}^n({x}^n) dx^n\\
&\overset{(a)}{=}\int_{{x}^n\in \sett_\delta^{X^n|U^n}}\Pr\{K={u}^n(i,j)|X^n={x}^n\}p_{X}^n({x}^n) dx^n  \\
&\leq \int_{{x}^n\in\sett_\delta^{X^n|U^n}}p_{X}^n({x}^n) dx^n =p_{X}^{n}(\sett_\delta^{X^n|U^n})\\
& \overset{(b)}{\leq} 2^{\left(-n(I(U;X)+\delta)\right)},
\end{align*}
where $(a)$ follows because for $(x^n,{u}^n(i,j))$ being not jointly typical, we have $\Pr \{K={u}^n(i,j)|X^n={x}^n\}=0$ and $(b)$ follows from Lemma \ref{lemma_cond_divergence}. This yields
{{\begin{align}
H(K) & \geq 
n (I(U;X)+ \delta) \nonumber\\
& = n H+o(n). \nonumber
\end{align}}}Thus, (\ref{ratecorrelated}) is satisfied. Now, it remains to prove that \eqref{errorcorrelated} is satisfied. We further define $I^\star=f(X^n)$ to be the random variable modeling the message encoded by Terminal $A$ and $\hat{I}^\star$ to be the random variable modeling the message decoded by Terminal $B$. 
We have:
\begin{align}
    \Pr\{K\neq L\} \nonumber &=\Pr\{K\neq L|I^\star=\hat{I}^\star\}\Pr\{I^\star=\hat{I}^\star\} \nonumber \\
        &\quad + \Pr\{K\neq L|I^\star\neq \hat{I}^\star\}\Pr\{I^\star\neq\hat{I}^\star\} \nonumber \\
        &\leq \Pr\{K\neq L|I^\star=\hat{I}^\star\}+ \Pr\{I^\star\neq\hat{I}^\star\}\nonumber.
\end{align}
we define the following event:
\begin{align}
    \sete= ``K(X^n) \ \text{is equal to none of the} \  {u}^n(i,j)s". \nonumber
\end{align}
We have
\begin{align}
   & \Pr\{K\neq L|I^\star=\hat{I}^\star\} \nonumber \\
   &= \Pr\{K\neq L|I^\star=\hat{I}^\star,\sete]\Pr\{\sete|I^\star=\hat{I}^\star\}   \nonumber \\
   &\quad + \Pr\{K\neq L|I^\star=\hat{I}^\star,\sete^c\}\Pr\{\sete^c|I^\star=\hat{I}^\star\} \nonumber \\
   &\overset{(a)}{=}\Pr\{K\neq L|I^\star=\hat{I}^\star,\sete^c\}\Pr\{\sete^c|I^\star=\hat{I}^\star\} \nonumber \\
   &\leq \Pr\{K\neq L|I^\star=\hat{I}^\star,\sete^c\},\nonumber
\end{align}
where $(a)$ follows from $\Pr\{K\neq L|I^\star=\hat{I}^\star,\sete\}=0,$ since conditioned on $I^\star=\hat{I}^\star$ and $\sete$, we know that $K$ and $L$ are both equal to $u^n_0$.
It follows that
\begin{align}
    &\Pr\{K\neq L\} \nonumber \\
    &\leq \Pr\{K\neq L|I^\star=\hat{I}^\star,\sete^c\}+ \Pr\{I^\star\neq\hat{I}^\star] \nonumber \\
    &\leq \Pr\{\cup_{i=1}^{4}\sete_i\}+ \Pr\{I^\star\neq\hat{I}^\star\} \nonumber \\
    &\overset{(a)}{\leq} P_e+\frac{\epsilon}{2}\\
    &\leq \epsilon,
\end{align}
 where $(a)$ follows from the union bound.\\
This completes the direct proof.
\section{Converse proof of  Theorem \ref{main theorem}}
\label{converse}
Let $(K,L)$ be a permissible pair according to a fixed CR-generation protocol of block-length $n,$ as introduced in Section \ref{systemmodel}.
We further assume that $(K,L)$ satisfies \eqref{errorcorrelated}
 \eqref{cardinalitycorrelated} and \eqref{ratecorrelated}.
We are going to show for some $\epsilon'(n)>0$ that
\begin{align}
    \frac{H(K)}{n} \leq \underset{ \substack{U \\{\substack{U \circlearrow{X} \circlearrow{Y}\\ I(U;X)-I(U;Y) \leq C(W)+\epsilon'(n)}}}}{\max} I(U;X), \nonumber 
\end{align}
where $\underset{n\rightarrow \infty}{\lim}\epsilon'(n)$ can be made arbitrarily small for $\epsilon>0$ chosen arbitrarily small.
In our proof, we will use  the following lemma: 
\begin{lemma} (Lemma 17.12 in \cite{IT_CiKo})
For arbitrary random variables $S$ and $R$ and sequences of random variables $X^{n}$ and $Y^{n}$, it holds that
\begin{align}
 &I(S;X^{n}|R)-I(S;Y^{n}|R) \nonumber \\ 
 &=\sum_{i=1}^{n} I(S;X_{i}|X_{1},\dots, X_{i-1}, Y_{i+1},\dots, Y_{n},R) \nonumber \\ &\quad -\sum_{i=1}^{n} I(S;Y_{i}|X_{1},\dots, X_{i-1}, Y_{i+1},\dots, Y_{n},R) \nonumber \\
 &=n[I(S;X_{J}|V)-I(S;Y_{J}|V)],\nonumber
\end{align}
where $V=(X_{1},\dots, X_{J-1},Y_{J+1},\dots, Y_{n},R,J)$, with $J$ being a random variable independent of $R$,\ $S$, \ $X^{n}$ \ and $Y^{n}$ and uniformly distributed on $\{1 ,\dots, n \}$.
\label{lemma1}
\end{lemma}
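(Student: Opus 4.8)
The plan is to establish the first (telescoping) equality by a standard hybrid-sequence argument, and then to introduce the time-sharing variable $J$ to pass to the single-letter form. First I would define, for $i = 0, 1, \dots, n$, the hybrid quantity
\[
A_i = I(S; X_1, \dots, X_i, Y_{i+1}, \dots, Y_n \mid R),
\]
so that $A_n = I(S; X^n \mid R)$ (the $Y$-block is empty) and $A_0 = I(S; Y^n \mid R)$ (the $X$-block is empty). Hence the left-hand side of the lemma equals $A_n - A_0 = \sum_{i=1}^{n} (A_i - A_{i-1})$ by telescoping, and the whole task reduces to evaluating each increment.

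The core step is to compute $A_i - A_{i-1}$. Writing the common block $W_i := (X_1, \dots, X_{i-1}, Y_{i+1}, \dots, Y_n)$, I observe that $A_i$ and $A_{i-1}$ both condition on $R$ and share the block $W_i$, and differ only in that $A_i$ carries the extra coordinate $X_i$ whereas $A_{i-1}$ carries $Y_i$. Applying the chain rule for mutual information with the grouping ``$W_i$, then the single changing coordinate'' gives
\[
A_i = I(S; W_i \mid R) + I(S; X_i \mid W_i, R), \qquad A_{i-1} = I(S; W_i \mid R) + I(S; Y_i \mid W_i, R).
\]
Subtracting, the common term $I(S; W_i \mid R)$ cancels, so $A_i - A_{i-1} = I(S; X_i \mid W_i, R) - I(S; Y_i \mid W_i, R)$, and summing over $i$ reproduces exactly the middle expression in the statement.

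For the second equality I introduce $J$ uniform on $\{1, \dots, n\}$ and independent of $(S, R, X^n, Y^n)$, and set $V = (X_1, \dots, X_{J-1}, Y_{J+1}, \dots, Y_n, R, J)$. Since $J$ is independent of the remaining variables, conditioning on the event $\{J = i\}$ leaves the joint law of $(S, R, X^n, Y^n)$ unchanged, and the component $J$ inside $V$ records which index is active; therefore
\[
I(S; X_J \mid V) = \sum_{i=1}^{n} \Pr\{J = i\}\, I(S; X_i \mid W_i, R) = \frac{1}{n} \sum_{i=1}^{n} I(S; X_i \mid W_i, R),
\]
and likewise for $Y_J$. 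Multiplying the difference by $n$ matches the telescoped sum, closing the identity.

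The main obstacle here is bookkeeping rather than genuine difficulty: I must treat the boundary cases $i = 0$ and $i = n$ (the empty $X$- and $Y$-blocks) correctly, and make sure the chain-rule grouping isolates precisely the one coordinate that flips between consecutive hybrids. Because the underlying sources are continuous (Gaussian), I would also note that every manipulation uses only the chain rule for mutual information, which is valid for the (possibly extended-real-valued) information quantities appearing here, so no extra finiteness hypothesis beyond that already implicit in the statement is required.
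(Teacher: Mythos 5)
Your proof is correct: the telescoping hybrid argument for the first equality and the time-sharing variable $J$ for the second are exactly the standard derivation of this identity (the Csisz\'ar sum identity). The paper itself offers no proof---it simply cites Lemma 17.12 of Csisz\'ar--K\"orner---and your argument reproduces the proof given in that reference, so there is nothing to compare beyond noting that your treatment of the boundary cases and of the chain rule for continuous variables is sound.
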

Let $J$ be a random variable uniformly distributed on $\{1,\dots, n\}$ and independent of $K$, $X^n$ and $Y^n$. We further define $U=(K,X_{1},\dots, X_{J-1},Y_{J+1},\dots, Y_{n},J).$ It holds that $U \circlearrow{X_J} \circlearrow{Y_J}.$ \\
Notice  that
{{\begin{align}
H(K)&\overset{(a)}{=}H(K)-H(K|X^{n})\nonumber\\
&=I(K;X^{n}) \nonumber\\
&\overset{(b)}{=}\sum_{i=1}^{n} I(K;X_{i}|X_{1},\dots, X_{i-1}) \nonumber\\
&=n I(K;X_{J}|X_{1},\dots, X_{J-1},J) \nonumber\\
&\overset{(c)}{\leq }n I(U;X_{J}), \nonumber
\end{align}}} where$(a)$ follows because $K=\Phi(X^n)$ and $(b)$ and $(c)$ follow from the chain rule for mutual information.
Applying Lemma \ref{lemma1} for $S=K$, $R=\varnothing$ with $V=(X_1,\hdots, X_{J-1},Y_{J+1},\hdots, Y_{n},J)$ yields
\begin{align}
&I(K;X^{n})-I(K;Y^{n}) 
\nonumber \\&=n[I(K;X_{J}|V)-I(K;Y_{J}|V)] \nonumber\\
&\overset{(a)}{=}n[I(KV;X_{J})-I(K;V)-I(KV;Y_{J})+I(K;V)] \nonumber\\
&\overset{(b)}{=}n[I(U;X_{J})-I(U;Y_{J})], 
\label{UhilfsvariableMIMO1}
\end{align}
where $(a)$ follows from the chain rule for mutual information and $(b)$ follows from $U=(K,V)$. \\
It results using (\ref{UhilfsvariableMIMO1}) that
\begin{align}
n[I(U;X_{J})-I(U;Y_{J})]
&=I(K;X^{n})-I(K;Y^{n}) \nonumber\\
&=H(K)-I(K;Y^{n})\nonumber \\ 
&=H(K|Y^n).
\label{star2MIMO2}
\end{align}
Next, we will show for some $\epsilon'(n)>0$ that
\begin{align}
    \frac{H(K|Y^n)}{n}\leq C(W)+\epsilon'(n). \nonumber
\end{align}
We have
\begin{equation}
H(K|Y^{n})=I(K;Z^{n}|Y^{n})+H(K|Y^{n}Z^{n}).\label{boxed2}
\end{equation}
On the one hand, it holds that
\begin{align} 
 I(K;Z^{n}|Y^{n})&\leq I(X^{n}K;Z^{n}|Y^{n}) \nonumber\\
& \overset{(a)}{\leq }I(T^n;Z^n|Y^{n})  \nonumber \\
& =  h(Z^n|Y^{n})- h(Z^n|T^n,Y^{n}) \nonumber \\
& \overset{(b)}{=}  h(Z^n|Y^{n})- h(Z^n|T^n) \nonumber \\
& \overset{(c)}{\leq }   h(Z^n)- h(Z^n|T^n) \nonumber \\
& = I(T^n;Z^n)  \nonumber \\
& \overset{(d)}{=} \sum_{i=1}^{n} I(Z_{i};T^n|Z^{i-1}) \nonumber \\
& = \sum_{i=1}^{n} h(Z_{i}|Z^{i-1})-h(Z_{i}|T^n,Z^{i-1}) \nonumber \\
& \overset{(e)}{=} \sum_{i=1}^{n} h(Z_{i}|Z^{i-1})-h(Z_{i}|T_{i}) \nonumber \\
& \overset{(f)}{\leq} \sum_{i=1}^{n} h(Z_{i})-h(Z_{i}|T_{i}) \nonumber \\
&=\sum_{i=1}^{n} I(T_{i};Z_{i}) \nonumber \\
&\leq n C(W), \label{part1}
\end{align}
where $(a)$ follows from the Data Processing Inequality because $Y^{n}\circlearrow{X^{n}K}\circlearrow{T^n}\circlearrow{Z^{n}}$ forms a Markov chain, where we used the fact that the Data Processing inequality holds also for continuous random variables \cite{dataprocessing}, $(b)$ follows because $Y^{n}\circlearrow{X^{n}K}\circlearrow{T^n}\circlearrow{Z^{n}}$ forms a Markov chain, $(c)(f)$ follow because conditioning does not increase entropy, $(d)$ follows from the chain rule for mutual information and $(e)$ follows because $T_{1},\dots, T_{i-1},T_{i+1},\dots, T_{n},Z^{i-1} \circlearrow{T_{i}}\circlearrow{Z_{i}}$ forms a Markov chain. 
\color{black}
On the other hand, it holds that
\begin{align}
H(K|Y^{n},Z^{n})&\overset{(a)}{\leq } H(K|L) \nonumber \\
&\overset{(b)}{\leq } 1+\log\lvert \mathcal{K} \rvert \Pr[K\neq L] \nonumber \\
&\overset{(c)}{\leq }1+\epsilon c n, \label{part2}
\end{align}
where (a) follows from $L=\Psi(Y^{n},Z^{n})$ in \eqref{KLSISOcorrelated}, 
(b) follows from Fano's Inequality using \eqref{errorcorrelated} and (c) follows from \eqref{cardinalitycorrelated}.

It follows from \eqref{boxed2}, \eqref{part1} and \eqref{part2} that
\begin{align}
     \frac{H(K|Y^n)}{n}\leq C(W)+\epsilon'(n),
\end{align}
where $\epsilon'(n)=\frac{1}{n}+\epsilon c.$
From \eqref{star2MIMO2}, we deduce that 
\begin{align}
    I(U;X_{J})-I(U;Y_{J})\leq C(W)+\epsilon'(n).
\end{align}
Since the joint distribution of $X_{J}$ and $Y_{J}$ is equal to $p_{XY}$, $\frac{H(K)}{n}$ is upper-bounded by $I(U;X)$ subject to $I(U;X)-I(U;Y) \leq C(W) + \epsilon'(n)$ with $U$ satisfying $U \circlearrow{X} \circlearrow{Y}$. As a result, it holds that
\begin{align}
    \frac{H(K)}{n} \leq \underset{ \substack{U \\{\substack{U \circlearrow{X} \circlearrow{Y}\\ I(U;X)-I(U;Y) \leq C(W)+\epsilon'(n)}}}}{\max} I(U;X).
    \nonumber
\end{align}
Here, $\underset{n\rightarrow \infty}{\lim}\epsilon'(n)$ can be made arbitrarily small by choosing $\epsilon$ to be an arbitrarily small positive constant.
This completes the converse proof of Thereom \ref{main theorem}. 
\section{conclusion}
In this paper, we investigated the problem of CR generation from correlated Gaussian sources with communication over noisy channels. We extended the CR capacity formula established in \cite{part2} to Gaussian sources and showed that in contrast to the discrete case, where the CR capacity is always finite, one can achieve an infinite CR rate when the Gaussian sources are perfectly correlated. The obtained results are highly useful in the problem of correlation-assisted identification over Gaussian channels as well as the problem of identification over Gaussian channels in the presence of noisy feedback.
\section{Acknowledgments}
H.\ Boche was supported by the Deutsche Forschungsgemeinschaft (DFG, German
Research Foundation) within the Gottfried Wilhelm Leibniz Prize under Grant BO 1734/20-1, and within Germany’s Excellence Strategy EXC-2111—390814868 and EXC-2092 CASA-390781972. C.\ Deppe was supported in part by the German Federal Ministry of Education and Research (BMBF) under Grant 16KIS1005. 
H. Boche, W. Labidi and R. Ezzine were supported by the German Federal Ministry of Education and Research (BMBF) under Grant 16KIS1003K.
\label{conclusion}
\bibliographystyle{IEEEtran}
\bibliography{definitions,references}

\begin{thebibliography}{10}
\providecommand{\url}[1]{#1}
\csname url@samestyle\endcsname
\providecommand{\newblock}{\relax}
\providecommand{\bibinfo}[2]{#2}
\providecommand{\BIBentrySTDinterwordspacing}{\spaceskip=0pt\relax}
\providecommand{\BIBentryALTinterwordstretchfactor}{4}
\providecommand{\BIBentryALTinterwordspacing}{\spaceskip=\fontdimen2\font plus
\BIBentryALTinterwordstretchfactor\fontdimen3\font minus
  \fontdimen4\font\relax}
\providecommand{\BIBforeignlanguage}[2]{{%
\expandafter\ifx\csname l@#1\endcsname\relax
\typeout{** WARNING: IEEEtran.bst: No hyphenation pattern has been}%
\typeout{** loaded for the language `#1'. Using the pattern for}%
\typeout{** the default language instead.}%
\else
\language=\csname l@#1\endcsname
\fi
#2}}
\providecommand{\BIBdecl}{\relax}
\BIBdecl

\bibitem{trafo}
R.~Ahlswede, ``General theory of information transfer: Updated,''
  \emph{Discrete Applied Mathematics}, vol. 156, pp. 1348--1388, 05 2008.

\bibitem{part2}
R.~{Ahlswede} and I.~{Csiszar}, ``Common randomness in information theory and
  cryptography. \uppercase{II. CR} capacity,'' \emph{IEEE Transactions on
  Information Theory}, vol.~44, no.~1, pp. 225--240, 1998.

\bibitem{ahlswede2021}
\BIBentryALTinterwordspacing
R.~Ahlswede, \emph{Watermarking Identification Codes with Related Topics on
  Common Randomness}.\hskip 1em plus 0.5em minus 0.4em\relax Cham: Springer
  International Publishing, 2021, pp. 271--325. [Online]. Available:
  \url{https://doi.org/10.1007/978-3-030-65072-8_16}
\BIBentrySTDinterwordspacing

\bibitem{Idchannels}
R.~{Ahlswede} and G.~{Dueck}, ``Identification via channels,'' \emph{IEEE
  Transactions on Information Theory}, vol.~35, no.~1, pp. 15--29, 1989.

\bibitem{application}
H.~{Boche} and C.~{Deppe}, ``Secure identification for wiretap channels;
  robustness, super-additivity and continuity,'' \emph{IEEE Transactions on
  Information Forensics and Security}, vol.~13, no.~7, pp. 1641--1655, 2018.

\bibitem{Tactilesinternet}
G.~P. {Fettweis}, ``The tactile internet: Applications and challenges,''
  \emph{IEEE Vehicular Technology Magazine}, vol.~9, no.~1, pp. 64--70, 2014.

\bibitem{MOULINwatermarking}
\BIBentryALTinterwordspacing
P.~Moulin, ``The role of information theory in watermarking and its application
  to image watermarking,'' \emph{Signal Processing}, vol.~81, no.~6, pp. 1121
  -- 1139, 2001, special section on Information theoretic aspects of digital
  watermarking. [Online]. Available:
  \url{http://www.sciencedirect.com/science/article/pii/S0165168401000378}
\BIBentrySTDinterwordspacing

\bibitem{AhlswedeWatermarking}
R.~Ahlswede and N.~Cai, \emph{Watermarking Identification Codes with Related
  Topics on Common Randomness}.\hskip 1em plus 0.5em minus 0.4em\relax Berlin,
  Heidelberg: Springer Berlin Heidelberg, 2006, pp. 107--153.

\bibitem{SteinbergWatermarking}
Y.~{Steinberg} and N.~{Merhav}, ``Identification in the presence of side
  information with application to watermarking,'' \emph{IEEE Transactions on
  Information Theory}, vol.~47, no.~4, pp. 1410--1422, 2001.

\bibitem{industry4.0}
Y.~Lu, ``Industry 4.0: A survey on technologies, applications and open research
  issues,'' \emph{Journal of Industrial Information Integration}, vol.~6, pp. 1
  -- 10, 2017.

\bibitem{Shannon}
C.~E. Shannon, ``A mathematical theory of communication,'' \emph{Bell System
  Technical Journal}, vol.~27, pp. 379--423, 623--656, July, October 1948.

\bibitem{part1}
R.~{Ahlswede} and I.~{Csiszar}, ``Common randomness in information theory and
  cryptography. \uppercase{I}. secret sharing,'' \emph{IEEE Transactions on
  Information Theory}, vol.~39, no.~4, pp. 1121--1132, 1993.

\bibitem{maurer}
U.~M. {Maurer}, ``Secret key agreement by public discussion from common
  information,'' \emph{IEEE Transactions on Information Theory}, vol.~39,
  no.~3, pp. 733--742, 1993.

\bibitem{globecom}
R.~{Ezzine}, W.~{Labidi}, H.~{Boche}, and C.~{Deppe}, ``Common randomness
  generation and identification over gaussian channels,'' in \emph{GLOBECOM
  2020 - 2020 IEEE Global Communications Conference (GLOBECOM)}, 2020, pp.
  1--6.

\bibitem{wafapaper}
W.~{Labidi}, C.~{Deppe}, and H.~{Boche}, ``Secure identification for
  \uppercase{G}aussian channels,'' in \emph{ICASSP 2020 - 2020 IEEE
  International Conference on Acoustics, Speech and Signal Processing
  (ICASSP)}, 2020, pp. 2872--2876.

\bibitem{SISOfasingCR}
R.~Ezzine, M.~Wiese, C.~Deppe, and H.~Boche, ``Common randomness generation
  over slow fading channels,'' in \emph{2021 IEEE International Symposium on
  Information Theory (ISIT)}, 2021, pp. 1925--1930.

\bibitem{MIMOfadingCR}
------, ``Outage common randomness capacity characterization of
  multiple-antenna slow fading channels,'' in \emph{2021 IEEE Information
  Theory Workshop (ITW)}, 2021, pp. 1--6.

\bibitem{isit_paper}
W.~Labidi, H.~Boche, C.~Deppe, and M.~Wiese, ``Identification over the gaussian
  channel in the presence of feedback,'' in \emph{2021 IEEE International
  Symposium on Information Theory (ISIT)}, 2021, pp. 278--283.

\bibitem{concatenation}
R.~Ahlswede and G.~Dueck, ``Identification in the presence of feedback-a
  discovery of new capacity formulas,'' \emph{IEEE Transactions on Information
  Theory}, vol.~35, no.~1, pp. 30--36, 1989.

\bibitem{Generalized_typ}
W.~Liu, X.~Chu, and J.~Zhang, ``On a generalised typicality with respect to
  general probability distributions,'' in \emph{2015 IEEE 14th Canadian
  Workshop on Information Theory (CWIT)}, 2015, pp. 165--169.

\bibitem{HanBook}
T.~S. {Han}, \emph{Information-Spectrum Methods in Information Theory}, ser.
  Stochastic Modelling and Applied Probability.\hskip 1em plus 0.5em minus
  0.4em\relax Springer-Verlag Berlin Heidelberg, 2014.

\bibitem{mitran_polish}
P.~{Mitran}, ``{Typical Sequences for Polish Alphabets},'' \emph{arXiv
  e-prints}, p. arXiv:1005.2321, May 2010.

\bibitem{Borel_Raginsky}
M.~Raginsky, ``Empirical processes, typical sequences, and coordinated actions
  in standard borel spaces,'' \emph{IEEE Transactions on Information Theory},
  vol.~59, no.~3, pp. 1288--1301, 2013.

\bibitem{cover}
\emph{Entropy, Relative Entropy, and Mutual Information}.\hskip 1em plus 0.5em
  minus 0.4em\relax John Wiley and Sons, Ltd, 2005, ch.~2, pp. 13--55.

\bibitem{correlationreference}
B.~L.{Van Der Waerden}, \emph{Mathematische Statistik}, 1965, ch.~13, pp.
  295--296.

\bibitem{IT_CiKo}
I.~Csisz{\'a}r and J.~K{\"o}rner,
  \emph{\BIBforeignlanguage{English}{Information theory: Coding theorems for
  discrete memoryless systems}}.\hskip 1em plus 0.5em minus 0.4em\relax
  Cambridge University Press, 1 2011.

\bibitem{dataprocessing}
S.~Ihara, \emph{Information Theory for Continuous Systems}, 1993, ch.~1, p.~39.

\end{thebibliography}

\IEEEtriggeratref{4}



\end{document}